\documentclass[11pt]{article}

\usepackage{amsmath,amssymb,amsthm,mathtools}
\usepackage[T1]{fontenc}
\usepackage{lmodern}
\usepackage{geometry}
\usepackage{graphicx}
\usepackage{hyperref}
\title{Consensus and Persistent Harmonic Edge Circulation in a Hodge-Theoretic Model of Networked Information Flow}
\author{Moses Boudourides\\School of Professional Studies\\Northwestern University\\\texttt{Moses.Boudourides@northwestern.edu}}
\date{}

\newtheorem{theorem}{Theorem}
\newtheorem{proposition}{Proposition}
\newtheorem{corollary}{Corollary}
\theoremstyle{definition}
\newtheorem{remark}{Remark}

\newcommand{\R}{\mathbb{R}}
\newcommand{\ip}[2]{\left\langle #1,#2\right\rangle}
\newcommand{\norm}[1]{\left\lVert #1\right\rVert}
\newcommand{\1}{\mathbf{1}}

\begin{document}
\maketitle

\begin{abstract}
We propose a finite-dimensional cochain model for information flow on an online communication complex. Node variables represent issue positions, while edge variables represent independently modelled signed information flow. The coupling is written in terms of the coboundary operator $d_0$ and the $1$-cochain Hodge Laplacian $\Delta_1=d_0d_0^*+d_1^*d_1$. We prove conservation of the mean opinion, a Lyapunov energy law, and convergence to an equilibrium determined by the initial harmonic projection of the edge flow. In particular, node opinions converge to consensus for every initial condition, whereas the edge flow converges to $P_{\mathcal H^1}u_0$. Thus, trivial first cohomology implies decay of the entire edge-flow variable, while nontrivial first cohomology provides capacity for a nonzero residual circulation only when the initial edge flow has a nonzero harmonic projection. The model therefore establishes that node consensus need not imply decay of an independently represented edge-flow variable; it does not model the formation, reinforcement, or amplification of behavioral echo chambers. We also consider linear damping and a bounded nonlinear saturation as modifications that remove persistent edge flow under the stated assumptions.
\end{abstract}

\noindent\textbf{Keywords:} opinion dynamics; information flow; Hodge Laplacian; harmonic edge circulation; higher-order networks; online networks

\section{Introduction}

Classical opinion-dynamics models describe how agents update beliefs by repeated averaging, social influence, or signed interaction \cite{DeGroot1974,FriedkinJohnsen2011,Altafini2012,GrabischEtAl2020}. In parallel, discrete exterior calculus and Hodge theory provide a natural language for distinguishing gradients, circulations, and harmonic components on graphs and cell complexes \cite{GradyPolimeni2010,BattistonEtAl2020,ZieglerEtAl2021,Lim2015}. This suggests a simple mathematical question: can one build a cochain dynamical system in which democratic information flow is decomposed into disagreement-driven transport and topologically persistent circulation?

The present note proposes a minimal phenomenological node--edge model for this question. A $0$-cochain records an aggregate issue position at each node, while a $1$-cochain records an independently modelled signed intensity of information flow along oriented edges. Keeping these variables separate is deliberate: node positions describe the state on which consensus is assessed, whereas the edge cochain records a transport-like field that may retain circulation after those positions have converged. The coupling follows a simple node--edge bookkeeping principle. The divergence $d_0^*u$ changes node positions through the net imbalance of incident edge flow; the coboundary $d_0x$ converts node disagreement into an exact edge-flow contribution; and the Hodge Laplacian dissipates the non-harmonic part of that edge field. The purpose is not to derive these laws from individual message-level behavior, but to isolate the mathematical question of whether node consensus requires decay of a separately represented edge-flow variable. The answer is governed by the initial harmonic projection $P_{\mathcal H^1}u_0$.

The present model lies at the intersection of several established strands of work, but it is not intended to replace them. Combinatorial Hodge theory has long been used to decompose static edge flows into gradient, curl, and harmonic components; HodgeRank is a prominent example in which this decomposition is used to analyze pairwise-comparison data \cite{JiangLimYaoYe2011,Lim2015}. Those works supply the algebraic interpretation of edge flows and harmonic components used here, but they do not study the present coupled node--edge ODE.

There is also direct nearby work on dynamics of higher-order and edge signals. Hansen and Ghrist develop discourse sheaves and sheaf-Laplacian diffusion dynamics for opinion dynamics with richer communication mechanisms, including selective expression and lying \cite{HansenGhristOpinionSheaves}. Ziegler et al. study edge consensus on simplicial complexes using balanced Hodge Laplacians and convergence toward a homology-space subspace \cite{ZieglerEtAl2021}. In a different but structurally related direction, port-Hamiltonian graph systems use incidence-based Dirac structures to couple vertex and edge variables in dissipative network dynamics \cite{VanDerSchaftMaschke2013}. Recent topological-Dirac models likewise couple signals on nodes and links, but address wave dynamics, reaction--diffusion instabilities, or synchronization rather than the present consensus problem \cite{Bianconi2021,GiambagliEtAl2022,CalmonEtAl2022}.

Accordingly, the contribution of the present note is deliberately narrow. It does not claim novelty for Hodge decomposition, harmonic invariance, incidence-based node--edge coupling, or edge consensus itself. Rather, it gives a transparent finite-dimensional cochain ODE for a separately represented edge-flow variable and establishes explicitly that node consensus can coexist with an initially excited residual harmonic edge flow. The result is a baseline mathematical statement about node--edge dynamics, whose behavioral interpretation remains limited as stated below.

From the viewpoint of nonlinear science, the model offers a minimal setting in which topology, dissipation, and network structure determine long-time node--edge dynamics. Recent work on higher-order networks emphasizes that group structure can qualitatively change synchronization, contagion, and consensus phenomena \cite{BattistonEtAl2020,MajhiPercGhosh2022}. The present formulation isolates a mathematically explicit persistence mechanism at the edge-flow level: harmonic directions are neutral under the Hodge dissipation and, when initially excited, remain after node opinions have converged. This is a statement about the persistence of an independently represented cochain variable, not a claim that the model forms an echo chamber.

Broader network opinion-dynamics research studies mechanisms that are deliberately held fixed in the present note. Higher-order opinion models specify group interactions directly, nonlinear multioption models can generate multistable agreement and disagreement through state-dependent influence, and coevolving-network models couple opinion updates to rewiring or other changes in the interaction structure \cite{ZhangXuZhangChen2024,BizyaevaFranciLeonard2023,Min2023}. Such models can address polarization, fragmentation, adaptive sensitivity, or changing social ties. By contrast, the present fixed-complex cochain ODE neither updates the communication topology nor specifies a behavioral influence rule; it isolates the consequence of Hodge dissipation for a separately represented edge-flow variable.

\section{Cochain setting and model}

Let $K=(V,E,F)$ be a finite oriented $2$-dimensional cell complex whose $1$-skeleton is connected. We write $C^0(K)$, $C^1(K)$, and $C^2(K)$ for the spaces of real-valued $0$-, $1$-, and $2$-cochains. Throughout the main analysis and all baseline simulations, we use the unweighted Euclidean inner products
\[
\ip{x}{y}_0:=\sum_{v\in V}x(v)y(v),\qquad
\ip{u}{w}_1:=\sum_{e\in E}u(e)w(e),\qquad
\ip{p}{q}_2:=\sum_{f\in F}p(f)q(f),
\]
and write $\norm{\cdot}$ for the associated norm on the relevant cochain space. Thus the adjoints in this article are Euclidean adjoints. In matrix notation, for a weighted extension with positive diagonal weight matrices $W_0$, $W_1$, and $W_2$, the corresponding adjoints would be
\[
d_0^*=W_0^{-1}d_0^{\mathsf T}W_1,
\qquad
d_1^*=W_1^{-1}d_1^{\mathsf T}W_2.
\]
The present paper fixes $W_0=I$, $W_1=I$, and $W_2=I$. Consequently, the harmonic projector $P_{\mathcal H^1}$ is the Euclidean orthogonal projector. Under nontrivial weights, both the adjoints and this projector would change; the capacity statement $\dim\mathcal H^1$ remains topological, whereas the realized projection $P_{\mathcal H^1}u_0$ is weight-dependent.

The coboundary operator is
\[
d_0:C^0(K)\to C^1(K),
\]
and its adjoint is denoted by $d_0^*$. If $d_1:C^1(K)\to C^2(K)$ is the next coboundary, then the $1$-cochain Hodge Laplacian is
\[
\Delta_1:=d_0d_0^*+d_1^*d_1.
\]
The space of harmonic $1$-cochains is
\[
\mathcal H^1:=\ker \Delta_1=\ker d_0^*\cap \ker d_1.
\]
When $K$ is only a connected graph, that is, when there are no $2$-cells, every graph cycle represents an independent first-cohomology direction up to the usual cycle relations, and
\[
\dim \mathcal H^1=|E|-|V|+1,
\]
the cycle rank of the connected graph \cite{GradyPolimeni2010,Lim2015}. In particular, trees satisfy $\mathcal H^1=\{0\}$, whereas an unfilled graph cycle may support a nontrivial harmonic mode.

For a genuine $2$-complex, however, a graph cycle and a first-cohomology class are not the same object. A graph cycle is a closed $1$-chain in the $1$-skeleton. A first-cohomology class is represented here by a harmonic $1$-cochain, equivalently by a cochain in $\ker d_0^*\cap\ker d_1$. If a graph cycle bounds a $2$-cell, it is a filled cycle: its circulation is measured by $d_1u$ and it does not produce an independent harmonic direction. Thus filling a cycle can reduce $\dim\mathcal H^1$ even though the same loop remains visible in the $1$-skeleton. Conversely, an unfilled cycle can contribute a first-cohomology direction. This distinction is essential below: $\dim\mathcal H^1$ measures the number of independent harmonic directions available after the $2$-cells have been taken into account, not merely the number of visible graph loops.

Accordingly, nontrivial harmonic space is the algebraic signature of a topological capacity for residual edge circulation. It does not, by itself, imply that a particular trajectory has nonzero residual flow: that requires $P_{\mathcal H^1}u_0\neq 0$.

The dynamical variables are a node-opinion field $x(t)\in C^0(K)$ and an edge-flow field $u(t)\in C^1(K)$. We interpret $x_i(t)$ as the current stance of agent or group $i$ on a given issue, and $u_e(t)$ as signed information flow along oriented edge $e$. Our basic model is
\begin{align}
\dot x &= -d_0^*u, \label{eq:x}\\
\dot u &= -\beta\Delta_1u+d_0x, \qquad \beta>0. \label{eq:u}
\end{align}
All variables and time in this minimal model are nondimensionalized. Thus $x$, $u$, and $t$ are dimensionless model variables, and $\beta>0$ is a dimensionless relative dissipation parameter after the node--edge coupling time scale has been fixed to one. This convention is adopted because the model is phenomenological rather than calibrated to a physical unit of message volume, exposure, or time.

The orientation of each edge is a fixed algebraic reference orientation used to define cochain coordinates and the signs in $d_0$ and $d_1$. For an edge oriented from $i$ to $j$, a positive value $u_e$ denotes flow in that chosen reference direction and a negative value denotes flow in the reverse direction. Reversing the reference orientation changes the sign of the corresponding cochain coordinate and the associated incidence entries, but not the underlying geometric edge flow, the Hodge decomposition, or the conclusions of the analysis. This arbitrary cochain orientation must not be confused with a genuinely directed communication network. The present complex represents pairwise communication relations with signed flow coordinates; it does not specify asymmetric tie existence, sender-specific transmission rules, or directed platform architecture.

In this phenomenological interpretation, the first equation is a balance law: an oriented edge-flow imbalance, represented by $-d_0^*u$, changes the node-position field. The second equation adds an exact flow induced by node disagreement, $d_0x$, while $-\beta\Delta_1u$ dissipates non-harmonic edge flow. Thus the model is appropriate for studying the structural compatibility of node consensus and residual edge circulation; it is not a micro-founded description of how users create, select, interpret, or retransmit individual messages.

\begin{remark}[Model role and scope]
The ODE system is a minimal phenomenological representation of coupled node positions and edge flow. Its intended role is to determine what the cochain structure and Hodge dissipation alone imply for long-time node--edge dynamics. It is not a micro-founded behavioral model and it does not represent polarization, cluster consensus, homophily, selective exposure, recommender systems, confirmation bias, stubbornness, message semantics, heterogeneous user activity, time-varying network formation, or empirical exposure data. Consequently, the model does not establish behavioral echo-chamber formation or a platform-design intervention. Within the limited topological analogy used here, topology determines the harmonic space $\mathcal H^1$ and hence the capacity for residual circulation, the initial condition determines whether that capacity is excited through $P_{\mathcal H^1}u_0$, and the long-time residual, when present, is exactly $P_{\mathcal H^1}u_0$.
\end{remark}

For $x\in C^0(K)$, let
\[
\bar x:=\frac{1}{|V|}\sum_{v\in V}x(v),
\qquad
x^\perp:=x-\bar x\1.
\]
Because the graph is connected, $\ker d_0=\operatorname{span}\{\1\}$.

\section{Basic invariants and the energy law}

The first property is conservation of the mean opinion.

\begin{proposition}[Mean opinion is conserved]\label{prop:mean}
Every solution of \eqref{eq:x}--\eqref{eq:u} satisfies
\[
\frac{d}{dt}\bar x(t)=0.
\]
Hence the average opinion remains equal to its initial value.
\end{proposition}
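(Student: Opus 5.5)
The plan is to express the mean as an inner product and use adjointness. With the Euclidean inner product on $C^0(K)$, we have $\bar x(t)=\frac{1}{|V|}\ip{x(t)}{\1}_0$. Since $\ip{\cdot}{\1}_0$ is a fixed linear functional, we can differentiate under it and substitute \eqref{eq:x}:
\[
\frac{d}{dt}\bar x(t)=\frac{1}{|V|}\ip{\dot x(t)}{\1}_0=-\frac{1}{|V|}\ip{d_0^*u(t)}{\1}_0 .
\]

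The key step is to move the adjoint across. By definition of $d_0^*$ with respect to the unweighted Euclidean products, $\ip{d_0^*u}{\1}_0=\ip{u}{d_0\1}_1$. Next, $d_0\1=0$. This holds because $(d_0 x)(e)=x(j)-x(i)$ for an edge $e$ oriented from $i$ to $j$, so constants are annihilated; equivalently, it is the inclusion $\operatorname{span}\{\1\}\subseteq\ker d_0$ noted just before this section. Hence the derivative vanishes identically. Note that \eqref{eq:u} is never used: the conclusion holds for any trajectory of $u$, provided $x$ is $C^1$. For the linear system \eqref{eq:x}--\eqref{eq:u} this is automatic, since solutions are matrix exponentials.

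Integrating then gives $\bar x(t)=\bar x(0)$ for all $t$. I expect no real obstacle. The only point needing care is that the adjoint must be the Euclidean one, so that the mean coincides with the $\ip{\cdot}{\1}_0$ pairing. Under a weighted $W_0$ the conserved quantity would instead be the $W_0$-weighted mean, and I would add a one-line remark saying so. Matrix-wise, the whole argument is the identity $\1^{\mathsf T}d_0^{\mathsf T}u=(d_0\1)^{\mathsf T}u=0$, since every row of the incidence matrix $d_0$ sums to zero.
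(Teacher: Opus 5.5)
Your proof is correct and matches the paper's: it writes $\frac{d}{dt}\sum_v x(v)=\ip{\1}{\dot x}=-\ip{\1}{d_0^*u}$, moves the adjoint across to $\ip{d_0\1}{u}$, uses $d_0\1=0$, and divides by $|V|$. Your side remarks are also correct: the $u$-equation is never used, and under a weight $W_0$ the conserved quantity becomes the $W_0$-weighted mean.
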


\begin{proof}
Since $d_0\1=0$, one has $\ip{\1}{d_0^*u}=\ip{d_0\1}{u}=0$. Therefore
\[
\frac{d}{dt}\sum_{v\in V}x(v)=\ip{\1}{\dot x}=-\ip{\1}{d_0^*u}=0.
\]
Dividing by $|V|$ gives the result.
\end{proof}

The natural Lyapunov function is
\[
E(x,u):=\norm{x^\perp}^2+\norm{u}^2.
\]

\begin{theorem}[Energy identity]\label{thm:energy}
Every solution of \eqref{eq:x}--\eqref{eq:u} satisfies
\begin{equation}
\frac12\frac{d}{dt}E(x(t),u(t)) + \beta\norm{\Delta_1^{1/2}u(t)}^2 =0,
\label{eq:energy}
\end{equation}
where $\norm{\Delta_1^{1/2}u}^2=\ip{\Delta_1u}{u}$, for all $t\geq 0$. In particular, $E$ is nonincreasing. This identity establishes dissipation of the non-harmonic edge-flow component, but by itself it does not specify a convergence rate for the coupled node--edge system.
\end{theorem}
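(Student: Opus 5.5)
The plan is to differentiate the two pieces of $E$ separately along a solution and then add them. The cross terms from the coupling should cancel by the adjoint relation, which leaves only the Hodge dissipation term.

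First, the node part. Since $\bar x$ is constant by Proposition~\ref{prop:mean}, we have $\dot{x}^\perp=\dot x-\dot{\bar x}\1=\dot x$. Hence
\[
\tfrac12\tfrac{d}{dt}\norm{x^\perp}^2=\ip{x^\perp}{\dot x}=-\ip{x^\perp}{d_0^*u}=-\ip{d_0x^\perp}{u}.
\]
Because $d_0\1=0$, we have $d_0x^\perp=d_0x$, so this equals $-\ip{d_0x}{u}_1$.

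Second, the edge part. From \eqref{eq:u},
\[
\tfrac12\tfrac{d}{dt}\norm{u}^2=\ip{u}{\dot u}=-\beta\ip{\Delta_1u}{u}+\ip{d_0x}{u}.
\]
Adding the two computations cancels the coupling terms $\pm\ip{d_0x}{u}$, and this gives \eqref{eq:energy}.

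It remains to justify the notation $\norm{\Delta_1^{1/2}u}^2=\ip{\Delta_1u}{u}$ and the sign of this term. Using the Euclidean adjoints,
\[
\ip{\Delta_1u}{u}=\norm{d_0^*u}^2+\norm{d_1u}^2\ge 0.
\]
So $\Delta_1$ is symmetric positive semidefinite, its square root is well defined by the spectral theorem, and the identity holds. Monotonicity of $E$ then follows at once from $\frac{d}{dt}E=-2\beta\ip{\Delta_1u}{u}\le 0$.

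The remark about dissipation of only the non-harmonic component comes from $\ip{\Delta_1u}{u}=\ip{\Delta_1(u-P_{\mathcal H^1}u)}{u-P_{\mathcal H^1}u}$, since $\Delta_1$ annihilates $\mathcal H^1$. The remark about the absence of a rate is interpretive commentary and needs no proof.

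There is no serious obstacle here. The one point to handle carefully is the cancellation. It relies on two facts: the node energy uses $x^\perp$ rather than $x$, which is harmless because $d_0$ kills constants; and the adjoint in \eqref{eq:x} is exactly the Euclidean adjoint of the $d_0$ in \eqref{eq:u}. Both are guaranteed by the paper's unweighted setting. Global existence and smoothness of solutions are automatic because the system is linear with constant coefficients.
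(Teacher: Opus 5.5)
Your proof is correct and follows the paper's argument essentially line by line. You differentiate $\norm{x^\perp}^2$ using conservation of the mean and $d_0\1=0$, differentiate $\norm{u}^2$, and cancel the coupling terms $\pm\ip{d_0x}{u}$ via the Euclidean adjoint; your extra check that $\ip{\Delta_1u}{u}=\norm{d_0^*u}^2+\norm{d_1u}^2\ge 0$ spells out the positive semidefiniteness the paper uses implicitly when writing $\Delta_1^{1/2}$ and concluding that $E$ is nonincreasing.
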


\begin{proof}
Since $\bar x$ is constant by Proposition~\ref{prop:mean}, we have $\dot x^\perp=\dot x=-d_0^*u$. Hence
\[
\frac12\frac{d}{dt}\norm{x^\perp}^2=\ip{x^\perp}{\dot x}=-\ip{x^\perp}{d_0^*u}=-\ip{d_0x^\perp}{u}=-\ip{d_0x}{u},
\]
because $d_0\1=0$. Also,
\[
\frac12\frac{d}{dt}\norm{u}^2=\ip{u}{\dot u}=-\beta\ip{\Delta_1u}{u}+\ip{u}{d_0x}.
\]
The mixed terms cancel, giving
\[
\frac12\frac{d}{dt}E(x,u)=-\beta\ip{\Delta_1u}{u}=-\beta\norm{\Delta_1^{1/2}u}^2.
\]
This is \eqref{eq:energy}.
\end{proof}

\section{Asymptotic behavior and residual harmonic edge flow}

The harmonic component of the information flow is invariant.

\begin{proposition}[Harmonic component is frozen]\label{prop:harmonic}
Let $P_{\mathcal H^1}$ denote the orthogonal projector onto $\mathcal H^1$. Then
\[
P_{\mathcal H^1}u(t)=P_{\mathcal H^1}u(0)
\qquad\text{for all } t\ge 0.
\]
\end{proposition}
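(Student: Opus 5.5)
The plan is to differentiate $P_{\mathcal H^1}u(t)$ in time and show that the derivative vanishes. Since $P_{\mathcal H^1}$ is a fixed linear operator and $u$ is a $C^1$ solution of the linear system \eqref{eq:x}--\eqref{eq:u}, we have
\[
\frac{d}{dt}P_{\mathcal H^1}u(t)=P_{\mathcal H^1}\dot u(t)=-\beta P_{\mathcal H^1}\Delta_1u(t)+P_{\mathcal H^1}d_0x(t).
\]
It therefore suffices to show that $P_{\mathcal H^1}$ annihilates both $\operatorname{im}\Delta_1$ and $\operatorname{im}d_0$. The conclusion then follows by integrating from $0$ to $t$.

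For the first term, $\Delta_1$ is self-adjoint with respect to the Euclidean inner product fixed in the cochain setting. Hence $\operatorname{im}\Delta_1=(\ker\Delta_1)^\perp=(\mathcal H^1)^\perp$, so $P_{\mathcal H^1}\Delta_1u=0$. One can also argue directly: for any $h\in\mathcal H^1$,
\[
\ip{h}{\Delta_1u}=\ip{\Delta_1h}{u}=0.
\]

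For the second term, take any $h\in\mathcal H^1$. The characterization $\mathcal H^1=\ker d_0^*\cap\ker d_1$ gives $d_0^*h=0$, and therefore
\[
\ip{h}{d_0x}=\ip{d_0^*h}{x}=0.
\]
So $d_0x\perp\mathcal H^1$, and $P_{\mathcal H^1}d_0x=0$.

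The only step that needs care is the identity $\ker\Delta_1=\ker d_0^*\cap\ker d_1$, which the paper states as part of the definition. To justify it, note that for $h$ with $\Delta_1h=0$,
\[
0=\ip{\Delta_1h}{h}=\norm{d_0^*h}^2+\norm{d_1h}^2,
\]
so both terms vanish; the reverse inclusion is immediate. Using the Euclidean adjoints consistently throughout is what makes the orthogonality arguments valid. Otherwise the argument is routine, and I expect no real obstacle.
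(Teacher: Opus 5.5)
Your proposal is correct and follows the paper's own proof: apply $P_{\mathcal H^1}$ to the $u$-equation and use $P_{\mathcal H^1}\Delta_1=0$ and $P_{\mathcal H^1}d_0=0$. Beyond the paper, you additionally justify the self-adjointness of $\Delta_1$ and the identity $\ker\Delta_1=\ker d_0^*\cap\ker d_1$, both of which the paper takes for granted.
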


\begin{proof}
Apply $P_{\mathcal H^1}$ to \eqref{eq:u}. Since $P_{\mathcal H^1}\Delta_1=0$ and $P_{\mathcal H^1}d_0=0$ by orthogonality of exact and harmonic $1$-cochains, one gets
\[
\frac{d}{dt}P_{\mathcal H^1}u=0.
\]
\end{proof}

\begin{corollary}[Capacity, excitation, and residual flow]\label{cor:capacity-excitation}
Let $P_{\mathcal H^1}$ be the orthogonal projector onto $\mathcal H^1$. The complex has capacity for residual harmonic edge flow if and only if $\mathcal H^1\neq\{0\}$. For a given initial condition, this capacity is excited if and only if $P_{\mathcal H^1}u_0\neq 0$. In that case, the residual edge flow is nonzero and equals $P_{\mathcal H^1}u_0$; if $P_{\mathcal H^1}u_0=0$, then the limiting edge flow vanishes.
\end{corollary}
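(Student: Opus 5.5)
The plan is to first prove a convergence statement for the coupled system, and then read off the three assertions of the corollary from it. Everything in the corollary follows once we know that every solution satisfies $u(t)\to P_{\mathcal H^1}u_0$ and $x(t)\to \bar x(0)\1$. The capacity assertion is then almost definitional. If $\mathcal H^1\neq\{0\}$, choosing $u_0\in\mathcal H^1\setminus\{0\}$ gives a trajectory with nonzero limit. If $\mathcal H^1=\{0\}$, then $P_{\mathcal H^1}=0$ and every limiting edge flow vanishes. The excitation and residual-flow assertions are exactly the statement that the limit equals $P_{\mathcal H^1}u_0$, which is nonzero iff $P_{\mathcal H^1}u_0\neq 0$.

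For the convergence I would use LaSalle's invariance principle on the linear system \eqref{eq:x}--\eqref{eq:u}. First restrict to the invariant affine set $\{\bar x=\bar x(0)\}$, using Proposition~\ref{prop:mean}. On this set $E$ is a proper quadratic function, since $\norm{x^\perp}^2+\norm{u}^2$ controls the full state once the mean is fixed. By Theorem~\ref{thm:energy}, $E$ is nonincreasing, so trajectories are bounded and have nonempty compact invariant $\omega$-limit sets. LaSalle then places the $\omega$-limit set inside the largest invariant subset of $\{\ip{\Delta_1 u}{u}=0\}=\{u\in\mathcal H^1\}$, using that $\Delta_1$ is positive semidefinite.

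The main step is identifying this invariant subset, and I expect it to be the main obstacle. Along a solution lying entirely in it, $u(t)\in\mathcal H^1$ for all $t$, so $\dot u(t)\in\mathcal H^1$. But then \eqref{eq:u} gives $\dot u=d_0x$, which is exact, and exact cochains are orthogonal to $\mathcal H^1$ (as in the proof of Proposition~\ref{prop:harmonic}). Hence $d_0x(t)=0$, so $x(t)\in\ker d_0=\operatorname{span}\{\1\}$. Since $u\in\ker d_0^*$, we also get $\dot x=-d_0^*u=0$ and $\dot u=0$. The invariant set therefore consists of equilibria $(\bar x(0)\1,h)$ with $h\in\mathcal H^1$.

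Finally, Proposition~\ref{prop:harmonic} gives $P_{\mathcal H^1}u(t)=P_{\mathcal H^1}u_0$ for all $t$. Every $\omega$-limit point $(\bar x(0)\1,h)$ must therefore satisfy $h=P_{\mathcal H^1}h=P_{\mathcal H^1}u_0$, because the projection is continuous. The $\omega$-limit set is thus the single point $(\bar x(0)\1,P_{\mathcal H^1}u_0)$, and the bounded trajectory converges to it.

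If the LaSalle route needed more care, I would fall back on a spectral argument instead. Decompose $u=P_{\mathcal H^1}u+w$ with $w\perp\mathcal H^1$, and note that the $(x^\perp,w)$ subsystem is linear. The same orthogonality reasoning shows it has no eigenvalue with zero real part, so it decays exponentially.
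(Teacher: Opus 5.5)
Your proposal is correct and follows essentially the paper's route. The paper reads the corollary off Theorem~\ref{thm:main}, which it proves by the same steps you use: LaSalle applied to the energy of Theorem~\ref{thm:energy}, identification of the invariant set as $\{(c\1,h): h\in\mathcal H^1\}$, and fixing $c$ and $h$ through Propositions~\ref{prop:mean} and~\ref{prop:harmonic}. You are slightly more explicit than the paper at two steps it passes over quickly: you use mean conservation to make $E$ proper, which gives precompactness, and you derive $d_0x=0$ from $\dot u=d_0x\in\mathcal H^1\cap\operatorname{im}d_0=\{0\}$.
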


We now identify the long-time limit.

\begin{theorem}[Consensus with residual harmonic edge flow]\label{thm:main}
Let $K$ be connected. For every initial condition $(x_0,u_0)\in C^0(K)\times C^1(K)$, the solution of \eqref{eq:x}--\eqref{eq:u} converges globally to the equilibrium
\[
(x(t),u(t))\to (\bar x_0\1,P_{\mathcal H^1}u_0)
\qquad \text{as } t\to\infty,
\]
where $\bar x_0$ is the initial average opinion.
\end{theorem}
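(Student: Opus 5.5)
The plan is to reduce the system to a linear ODE on the invariant subspace orthogonal to the conserved quantities, and then show that this reduced system has no nontrivial bounded invariant dynamics. Since the system is linear and finite-dimensional, convergence to equilibrium is equivalent to the absence of eigenvalues on the imaginary axis other than those accounted for by the conserved directions. I would combine the energy identity with a LaSalle-type (or spectral) argument.

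\textbf{Step 1 (splitting off invariants).} Write $u=h+w$ with $h=P_{\mathcal H^1}u$ and $w\in(\mathcal H^1)^\perp=\operatorname{im}d_0\oplus\operatorname{im}d_1^*$. By Proposition~\ref{prop:harmonic}, $h(t)\equiv P_{\mathcal H^1}u_0$. Also $d_0^*h=0$ and $\Delta_1h=0$. Hence $(x^\perp,w)$ satisfies the closed linear system
\[
\dot x^\perp=-d_0^*w,\qquad \dot w=-\beta\Delta_1w+d_0x^\perp,
\]
on $Z:=\{x:\bar x=0\}\times(\mathcal H^1)^\perp$. This subspace is invariant, because $d_0x^\perp\in\operatorname{im}d_0$, $\Delta_1$ preserves $(\mathcal H^1)^\perp$, and $d_0^*w$ has zero mean. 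By Proposition~\ref{prop:mean}, $\bar x(t)=\bar x_0$. So the theorem reduces to showing that $(x^\perp,w)\to0$.

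\textbf{Step 2 (energy).} Theorem~\ref{thm:energy} applied to the reduced system gives that $\|x^\perp\|^2+\|w\|^2$ is nonincreasing. Its derivative is $-2\beta\ip{\Delta_1w}{w}$, which vanishes iff $w\in\mathcal H^1$. Since $w\perp\mathcal H^1$, this means it vanishes iff $w=0$.

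\textbf{Step 3 (main obstacle: no imaginary-axis modes).} The energy is only semidefinite dissipation, so I would rule out persistent oscillation. Call the reduced generator $A$, and suppose $A(\xi,\eta)=i\omega(\xi,\eta)$ with $(\xi,\eta)\neq0$ in the complexification of $Z$. The energy identity for complex modes gives $\operatorname{Re}\ip{A z}{z}=-\beta\ip{\Delta_1\eta}{\eta}$, which must be $0$. Hence $\Delta_1\eta=0$, and since $\eta\perp\mathcal H^1$, $\eta=0$. The second equation then reads $0=i\omega\eta=d_0\xi$. So $\xi\in\ker d_0=\operatorname{span}\{\1\}$, and since $\bar\xi=0$, we get $\xi=0$, a contradiction. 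This is exactly the LaSalle argument: the largest invariant set within $\{w=0\}$ forces $d_0x^\perp=0$, hence $x^\perp=0$. Thus $A$ is Hurwitz on $Z$ (all eigenvalues have negative real part, because $E$ is nonincreasing and no eigenvalue is purely imaginary). Therefore $(x^\perp(t),w(t))\to0$, in fact exponentially.

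\textbf{Step 4 (conclusion).} Recombining the pieces gives $x(t)=\bar x_0\1+x^\perp(t)\to\bar x_0\1$ and $u(t)=P_{\mathcal H^1}u_0+w(t)\to P_{\mathcal H^1}u_0$, for every initial condition.
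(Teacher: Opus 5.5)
Your proof is correct, and it takes a different route from the paper's.

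The paper works on the full phase space. The energy identity gives boundedness and precompactness. LaSalle's invariance principle then places the $\omega$-limit set inside the largest invariant subset of $\{u\in\mathcal H^1\}$, which is shown to be the equilibrium manifold $\{(c\1,h): h\in\mathcal H^1\}$. Finally, the two conserved quantities $\bar x$ and $P_{\mathcal H^1}u$ select the single point $(\bar x_0\1,P_{\mathcal H^1}u_0)$.

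You instead remove those conserved quantities first, restricting to the invariant complement $Z=\{\bar x=0\}\times(\mathcal H^1)^\perp$. You then show the restricted generator $A$ is Hurwitz by an eigenvector (Popov--Belevitch--Hautus-type) test. Dissipativity gives $\operatorname{Re}\lambda\le 0$. A purely imaginary eigenvalue would force $\eta\in\mathcal H^1\cap(\mathcal H^1)^\perp=\{0\}$, and then $d_0\xi=0$ with $\bar\xi=0$, so the eigenvector vanishes. The one step worth writing out is the complex energy identity: under the Hermitian product the coupling terms $\ip{-d_0^*\eta}{\xi}+\ip{d_0\xi}{\eta}$ sum to a purely imaginary number, so only $-\beta\ip{\Delta_1\eta}{\eta}$ survives in the real part.

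The algebraic core is the same as the paper's invariant-set computation: zero dissipation kills the non-harmonic edge component, and the $u$-equation then forces $d_0x=0$. What each route buys:
\begin{itemize}
\item Your version exploits linearity. It avoids precompactness and LaSalle altogether and yields exponential convergence, at a rate governed by the spectral abscissa of $A$ on $Z$. This is consistent with the paper's remark that decay rates are set by the coupled spectrum rather than by $\Delta_1$ alone, though you do not quantify that rate.
\item The paper's LaSalle route does not use linearity. It carries over essentially verbatim to the saturated variant in Proposition~\ref{prop:nonlinear}. There your spectral argument would give only local information after linearization at the origin.
\end{itemize}
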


\begin{proof}
By Theorem~\ref{thm:energy}, the energy is bounded and nonincreasing, so every trajectory is precompact in the finite-dimensional phase space. LaSalle's invariance principle implies that the omega-limit set is contained in the largest invariant subset of
\[
\{(x,u): \norm{\Delta_1^{1/2}u}=0\}=
\{(x,u): u\in \mathcal H^1\}.
\]
If $u\in \mathcal H^1$, then $d_0^*u=0$, so \eqref{eq:x} gives $\dot x=0$. Invariance under \eqref{eq:u} then requires $d_0x=0$, hence $x$ is constant because the $1$-skeleton is connected. Therefore every limit point lies in the equilibrium manifold
\[
\mathcal M:=\{(c\1,h): c\in \R,\ h\in \mathcal H^1\}.
\]
By Proposition~\ref{prop:mean}, the constant must equal $\bar x_0$. By Proposition~\ref{prop:harmonic}, the harmonic component must equal $P_{\mathcal H^1}u_0$. Hence the omega-limit set reduces to the singleton $(\bar x_0\1,P_{\mathcal H^1}u_0)$, which proves the claim.
\end{proof}

\begin{remark}[Decay rates require the coupled spectrum]
The convergence statement in Theorem~\ref{thm:main} is qualitative. Although $\Delta_1$ determines the dissipation term in the energy identity, its smallest positive eigenvalue does not by itself determine every decay rate of the coupled $(x,u)$ dynamics. To see the coupling explicitly, let $\phi$ be a nonconstant eigenvector of the graph Laplacian $L_0:=d_0^*d_0$ with eigenvalue $\mu>0$, and consider the associated exact edge direction $d_0\phi$. On the two-dimensional invariant subspace spanned by $(\phi,0)$ and $(0,d_0\phi)$, the modal amplitudes $(a,b)$ satisfy
\[
\dot a=-\mu b,
\qquad
\dot b=a-\beta\mu b,
\]
and hence
\[
\ddot a+\beta\mu\dot a+\mu a=0.
\]
This exact-sector reduction has the form of a damped graph-wave, or second-order consensus, equation: eliminating the independently represented edge-flow variable yields a graph-Laplacian restoring term and mode-dependent damping. This is a structural mathematical correspondence, not a port-Hamiltonian formulation or a physical inertia interpretation.
In this sense, even on exact modes the decay exponents depend jointly on the graph-Laplacian eigenvalue $\mu$ and the parameter $\beta$. On the coexact subspace $\operatorname{im}d_1^*$, the edge dynamics reduce to Hodge-Laplacian decay, while the harmonic subspace is invariant. Quantitative decay estimates therefore require analysis of the full coupled generator, with its decomposition into exact, coexact, and harmonic sectors; they do not follow from the energy identity or from $\lambda_1^+(\Delta_1)$ alone.
\end{remark}

\begin{corollary}[Trivial first cohomology implies full consensus]\label{cor:consensus}
If $\mathcal H^1=\{0\}$, then every solution satisfies
\[
x(t)\to \bar x_0\1,
\qquad
u(t)\to 0.
\]
Thus the model converges to consensus with vanishing edge flow for every initial condition. More generally, when $\mathcal H^1\neq\{0\}$, vanishing limiting edge flow still occurs whenever $P_{\mathcal H^1}u_0=0$.
\end{corollary}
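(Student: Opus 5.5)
The plan is to read Corollary~\ref{cor:consensus} off Theorem~\ref{thm:main}, whose limit is already fully identified; nothing beyond specializing that limit is needed. Theorem~\ref{thm:main} states that for every initial condition $(x_0,u_0)$ the trajectory converges to $(\bar x_0\1,P_{\mathcal H^1}u_0)$. So both claims of the corollary reduce to showing that $P_{\mathcal H^1}u_0=0$ under the respective hypothesis.

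\textbf{First claim.} Assume $\mathcal H^1=\{0\}$. Then the orthogonal projector $P_{\mathcal H^1}$ is the zero operator on $C^1(K)$. Hence $P_{\mathcal H^1}u_0=0$ for every $u_0$, and Theorem~\ref{thm:main} gives $x(t)\to\bar x_0\1$ and $u(t)\to 0$. The statement ``for every initial condition'' is inherited directly from the global convergence in Theorem~\ref{thm:main}, so no additional uniformity argument is required.

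\textbf{Second claim.} Assume $\mathcal H^1\neq\{0\}$ but $P_{\mathcal H^1}u_0=0$. The same substitution into the limit of Theorem~\ref{thm:main} yields $u(t)\to 0$. Equivalently, this is the ``if $P_{\mathcal H^1}u_0=0$'' clause of Corollary~\ref{cor:capacity-excitation}. The node limit is again consensus at $\bar x_0\1$, which is consistent with the conservation of the mean in Proposition~\ref{prop:mean}.

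There is no genuine obstacle here; all the substance sits in Theorem~\ref{thm:main}, namely the LaSalle argument together with the invariance of the harmonic component from Proposition~\ref{prop:harmonic}. The only point I would state explicitly is that the projector onto the trivial subspace is identically zero, so that the residual $P_{\mathcal H^1}u_0$ vanishes for every initial condition. Optionally, one could add a check that $\mathcal H^1=\{0\}$ holds in the intended examples: trees, via the cycle rank formula $|E|-|V|+1=0$, and complexes in which every cycle is filled. This is illustration, not part of the proof.
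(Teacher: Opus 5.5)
Your proposal is correct and takes essentially the same approach as the paper, which states the corollary without a separate proof and leaves it as a direct specialization of the limit $(\bar x_0\1,P_{\mathcal H^1}u_0)$ in Theorem~\ref{thm:main}. As you say, $P_{\mathcal H^1}$ is the zero operator when $\mathcal H^1=\{0\}$, and $P_{\mathcal H^1}u_0=0$ holds by hypothesis in the second clause.
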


\begin{remark}
Theorem~\ref{thm:main} separates node-level consensus from edge-level persistence. Node opinions always converge to consensus in this linear model, but the edge flow converges to the residual harmonic component $P_{\mathcal H^1}u_0$. Hence a nonzero residual circulation occurs if and only if the initial edge flow has a nonzero harmonic projection. Nontrivial first cohomology provides the capacity for such a residual component, but does not by itself imply that one is excited. This result concerns an independently represented edge-flow variable. It should not be interpreted as establishing the formation, reinforcement, or amplification of a behavioral echo chamber.
\end{remark}

\section{Numerical study}\label{sec:numerics}

We supplement the single illustrative cycle with a reproducible numerical study of the linear model and its two dissipative variants. The study is organized around three topological settings and one set of modified dynamics. First, an unfilled $12$-cycle provides a larger connected graph with one first-cohomology direction. Second, the same $12$-edge boundary loop is completed to a triangulated disk by adding a central vertex, twelve radial edges, and twelve triangular $2$-cells; the resulting complex has trivial first cohomology. Third, a path and bouquets of one, two, and three cycles provide connected complexes with first Betti numbers $0$, $1$, $2$, and $3$, respectively. The variant runs use the unfilled $12$-cycle and compare the baseline system with uniform linear damping and componentwise $\tanh$ saturation. All simulations use the Euclidean cochain conventions fixed in the cochain-setting section above; the complete solver, parameter, initial-condition, and code specifications are given below and in the accompanying reproducibility package.

The numerical outputs are used only to illustrate and verify the analytical conclusions for the stated finite-dimensional models. In particular, a nonzero long-time harmonic norm is interpreted as realized residual edge flow after a deliberately chosen harmonic initial component; it is not interpreted as a behavioral echo-chamber measurement.

\subsection{Harmonic excitation on a larger unfilled cycle}

Figure~\ref{fig:harmonic-initialization} presents a controlled pair of runs on the unfilled $12$-cycle, for which $\dim\mathcal H^1=1$. The node initialization and non-harmonic edge-flow component are identical in the two runs; only the harmonic projection of the initial edge flow differs. In the zero-projection run, the harmonic component is absent and the total edge flow decays toward zero, with only a small non-harmonic transient remaining at the finite reporting time. In the excited run, node disagreement again decays whereas the initialized harmonic component remains at norm $1.5$. The comparison isolates harmonic excitation from topological capacity, as stated in Theorem~\ref{thm:main}.

\begin{figure}
\centering
\includegraphics[width=0.96\textwidth]{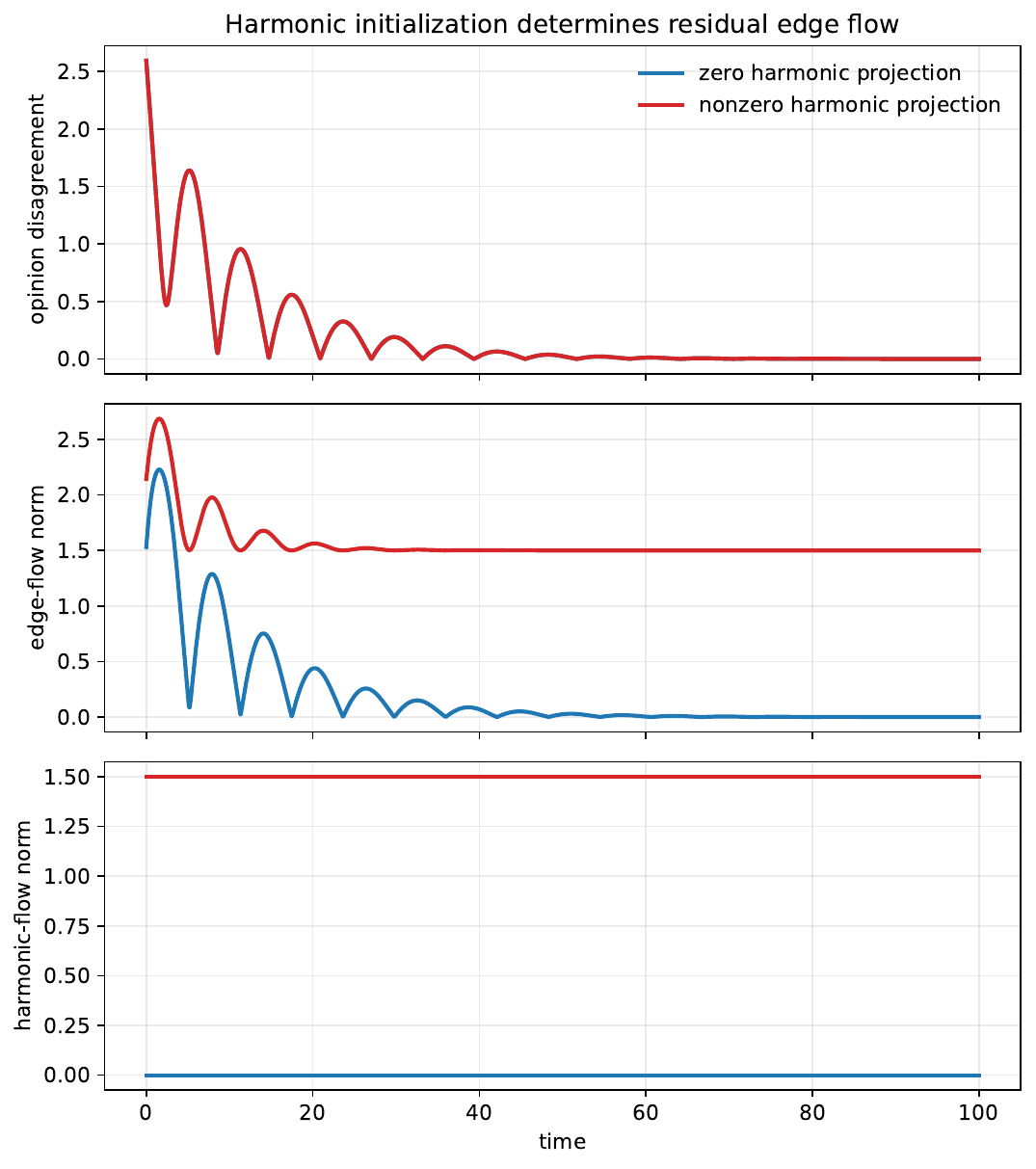}
\caption{Controlled harmonic-initialization comparison on the unfilled $12$-cycle, for which $\dim\mathcal H^1=1$. The two runs differ only in their initial harmonic projection. The zero-projection run has no residual harmonic flow and its total edge flow decays toward zero; the deliberately excited run retains the initialized harmonic component of norm $1.5$. Node disagreement decays in both runs.}
\label{fig:harmonic-initialization}
\end{figure}

\subsection{Filled and unfilled boundary loops}

Figure~\ref{fig:filled-unfilled} compares the unfilled $12$-cycle with the same boundary loop completed to a triangulated disk by a central vertex, twelve radial edges, and twelve triangular $2$-cells. The unfilled graph has first Betti number one and retains the initialized harmonic component. The filled disk has first Betti number zero: the boundary loop is still visible in the $1$-skeleton but is no longer an independent first-cohomology direction, and the full edge flow decays. The comparison therefore isolates the role of the $2$-cell in removing the capacity for a harmonic residual.

\begin{figure}
\centering
\includegraphics[width=0.96\textwidth]{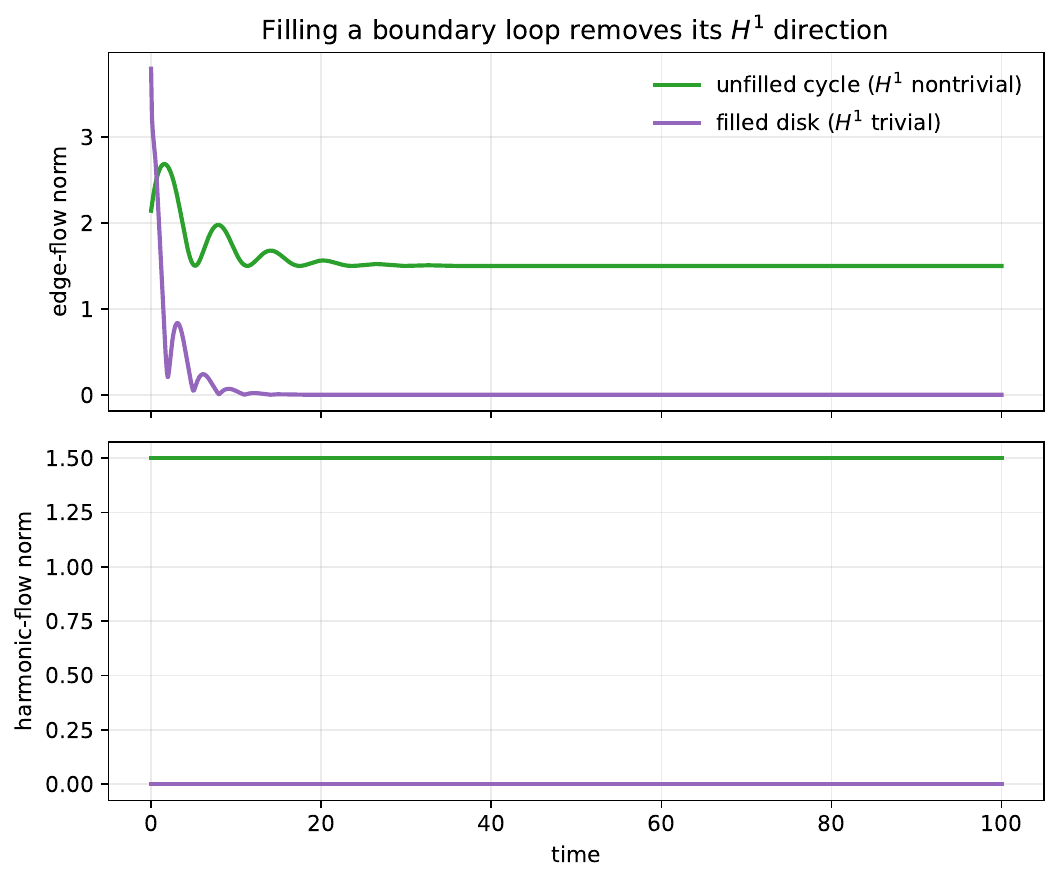}
\caption{Filled-versus-unfilled comparison with the same boundary-loop scale. The unfilled $12$-cycle has one first-cohomology direction and retains the deliberately initialized harmonic residual. In the triangulated disk, the same visible boundary loop is filled by twelve $2$-cells, so it contributes no independent $H^1$ direction and the full edge flow decays.}
\label{fig:filled-unfilled}
\end{figure}

\subsection{Varying first-cohomology rank}

Figure~\ref{fig:cycle-rank} compares a rank-zero path with bouquets carrying one, two, and three independent unfilled cycles. The rank-zero path has no harmonic sector and its edge flow decays. For each bouquet, the initialized harmonic component remains while the node disagreement decays. These runs do not show that larger first Betti number creates a residual by itself; rather, they display the increasing dimension of the available harmonic subspace under deliberately chosen nonzero harmonic initializations.

\begin{figure}
\centering
\includegraphics[width=0.96\textwidth]{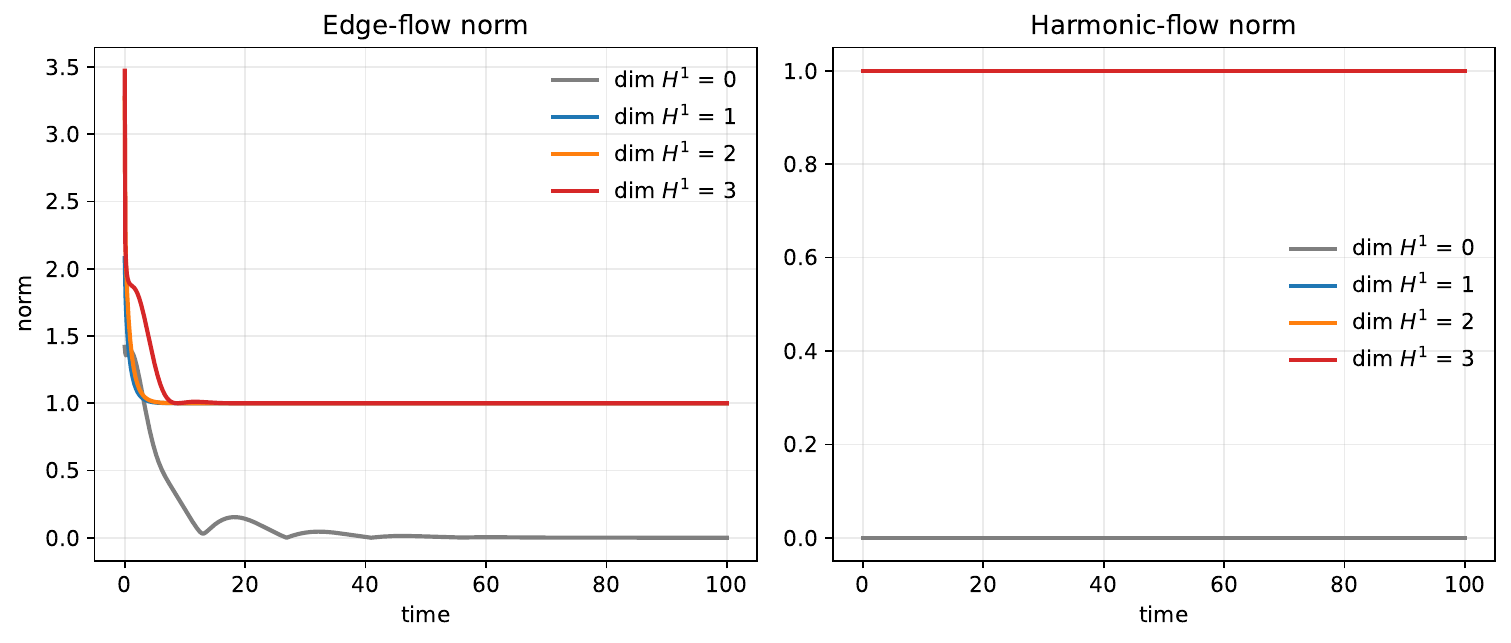}
\caption{Comparison across first-Betti-number ranks $0$, $1$, $2$, and $3$. The rank-zero path has no harmonic sector and its edge flow decays. Each nontrivial bouquet is given a unit-norm harmonic initial component and retains that component while node disagreement decays. The plot displays the dimension of the available harmonic sector under controlled excitation; it does not show that increasing rank automatically generates persistence.}
\label{fig:cycle-rank}
\end{figure}

\subsection{Baseline, damping, and saturation}

Figure~\ref{fig:variants} compares the baseline system on the unfilled $12$-cycle with the linearly damped system and the componentwise $\tanh$-saturation system. The baseline run retains its initialized harmonic component. The damping term removes the full edge-flow variable, and the saturation term likewise removes all residual edge flow under the strict sign condition stated in Proposition~\ref{prop:nonlinear}. These numerical runs illustrate the nonselective effects proved for the two variants in the following analytical section; they do not claim selective suppression of cycle-supported flow.

\begin{figure}
\centering
\includegraphics[width=0.96\textwidth]{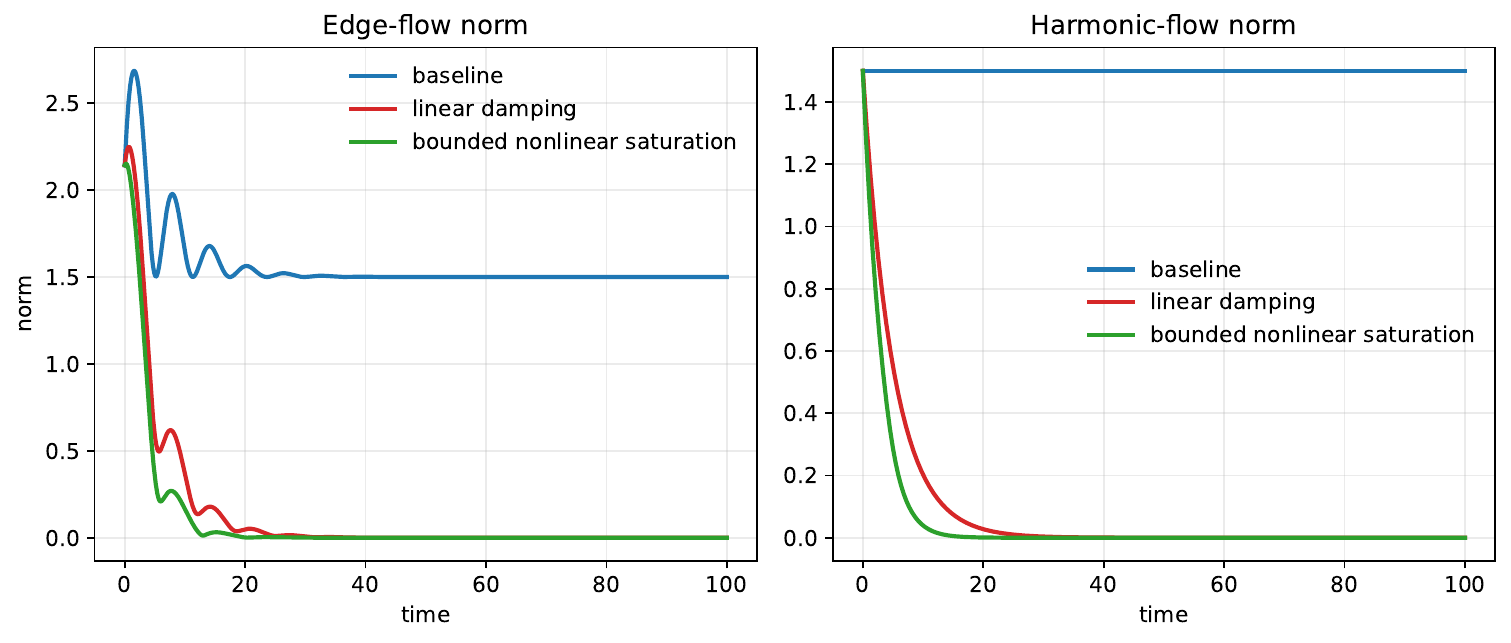}
\caption{Baseline and modified edge-flow dynamics on the same unfilled $12$-cycle with the same initial harmonic norm $1.5$. The baseline retains the initialized harmonic residual. Uniform linear damping and componentwise bounded nonlinear saturation both remove the full edge-flow variable, including its harmonic component, consistent with Propositions~\ref{prop:damped} and~\ref{prop:nonlinear}.}
\label{fig:variants}
\end{figure}

\subsection{Controlled comparisons and terminal diagnostics}

All comparisons use the same deterministic node initialization, the same Euclidean cochain conventions, and a common integration horizon $T=100$. The controlled harmonic-initialization pair on the unfilled $12$-cycle differs only by the amplitude of the initial harmonic projection: its norm is numerically zero in the first run and $1.5$ in the second. At $T=100$, the zero-projection run has harmonic norm below $2.5\times 10^{-16}$ and total edge-flow norm $4.10\times 10^{-4}$, representing a small remaining non-harmonic transient. In contrast, the excited run has terminal total and harmonic edge-flow norms both equal to $1.5$ to the displayed precision. Node disagreement is below $2\times10^{-4}$ in both runs. Thus the comparison changes the initial harmonic component while holding the topology and the non-harmonic initialization fixed, and it confirms the distinction between capacity and realized persistence.

The filled-versus-unfilled comparison changes the topology while keeping the boundary-loop scale fixed. The unfilled $12$-cycle has first Betti number one and, with harmonic initialization of norm $1.5$, retains terminal harmonic norm $1.5$. The filled $12$-edge disk has first Betti number zero and terminal total edge-flow norm $3.35\times10^{-18}$. Thus the comparison isolates the effect of filling the boundary loop: the visible loop remains, but the corresponding harmonic capacity is removed. The rank comparison gives the same result across larger topological variation. The rank-zero path has terminal edge-flow norm $8.77\times10^{-5}$, whereas the bouquets with first Betti numbers one, two, and three retain their deliberately initialized unit-norm harmonic components. These results display the dimension of the available harmonic sector; they do not imply that larger rank automatically creates residual circulation.

Finally, the variant comparison uses the same unfilled $12$-cycle and the same initial harmonic norm $1.5$ in all three runs. The baseline retains terminal harmonic norm $1.5$. With uniform damping coefficient $\eta=0.20$, the terminal total edge-flow norm is $1.88\times10^{-8}$ and the harmonic norm is $3.09\times10^{-9}$. With componentwise saturation $0.30\tanh(u/0.75)$, the terminal total edge-flow norm is $1.99\times10^{-13}$ and the harmonic norm is $9.16\times10^{-18}$. The numerical outcomes therefore agree with the analytical nonselectivity statements: both modifications remove residual edge flow generally, not only harmonic or cycle-supported components.

\subsection{Numerical specification and reproducibility}

The revised numerical evidence is generated by the deterministic implementation openly archived as version 1.0.0 at \href{https://doi.org/10.5281/zenodo.21946369}{https://doi.org/10.5281/zenodo.21946369} and maintained in the version-controlled repository \href{https://github.com/mboudour/hodge-node-edge-information-flow}{https://github.com/mboudour/hodge-node-edge-information-flow}. The release contains the complete simulation and validation scripts, the dependency specification, raw trajectory files, JSON metadata, a numerical-summary file, and the four PDF figures used above. No random number generator is used.

For every complex, an edge listed as $(i,j)$ is oriented from $i$ to $j$, and the node--edge incidence matrix has entry $-1$ at the tail and $+1$ at the head. The $1$-Hodge Laplacian is formed as $L_1=B_1^{\mathsf T}B_1+B_2B_2^{\mathsf T}$. The unfilled comparison is the oriented $12$-cycle. The filled comparison adds one central node, twelve radial edges, and twelve oriented triangles to form a triangulated disk. The rank-zero complex is a $13$-node path; the rank-one, rank-two, and rank-three complexes are bouquets of respectively one, two, and three unfilled $8$-cycles sharing a common node.

The baseline, harmonic-initialization, and filled/unfilled runs use $\beta=0.65$. The rank comparison uses $\beta=3.0$ to display the terminal behavior clearly on the common integration horizon. The initial node state is the deterministic zero-mean vector
\[
x_i(0)=\cos\!\left(\frac{2\pi i}{n}\right)+0.35\sin\!\left(\frac{4\pi i}{n}\right),
\]
with its discrete mean subtracted. The initial edge state is $u(0)=B_1^{\mathsf T}x(0)+a h$, where $h$ is a canonical unit vector in the harmonic subspace constructed from the orthogonal harmonic projector. The amplitude is $a=0$ or $a=1.5$ in the harmonic-initialization comparison, $a=1.5$ in the unfilled and variant runs, and $a=1$ in the nontrivial-rank bouquet runs. This projector-based construction avoids dependence on arbitrary eigenbasis choices when the harmonic subspace has dimension greater than one.

All full runs integrate to $T=100$ at $2001$ reported time points with \texttt{scipy.integrate.solve\_ivp}, using the \texttt{DOP853} method, relative tolerance $10^{-9}$, absolute tolerance $10^{-11}$, and maximum step size $0.10$. The linear-damping variant uses $\eta=0.20$. The nonlinear variant uses the componentwise term $0.30\tanh(u/0.75)$. Before the full runs, the validation script checks $B_1B_2=0$, the first-Betti-number counts, conservation of the node mean, nonincrease of the energy, and basis-invariance of the canonical harmonic initialization for the rank-two and rank-three bouquet complexes.

\section{Damped and nonlinear variants}

\subsection{Linear damping}

As a mathematical modification that adds direct coercive dissipation to the entire edge-flow variable, consider the linearly damped system
\begin{align}
\dot x &= -d_0^*u, \label{eq:xd}\\
\dot u &= -\beta\Delta_1u-\eta u+d_0x, \qquad \beta>0,\ \eta>0. \label{eq:ud}
\end{align}

\begin{proposition}[Damping removes residual edge flow]\label{prop:damped}
Every solution of \eqref{eq:xd}--\eqref{eq:ud} satisfies
\[
\frac12\frac{d}{dt}E(x(t),u(t))
+\beta\norm{\Delta_1^{1/2}u(t)}^2
+\eta\norm{u(t)}^2=0.
\]
Consequently,
\[
x(t)\to \bar x_0\1,
\qquad
u(t)\to 0
\qquad\text{as } t\to\infty,
\]
even when $\mathcal H^1\neq\{0\}$.
\end{proposition}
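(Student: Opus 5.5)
The plan follows the template of Theorem~\ref{thm:energy} and Theorem~\ref{thm:main}, with the extra term $-\eta u$ carried through.

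\emph{Energy identity.} Proposition~\ref{prop:mean} still applies, because its proof uses only \eqref{eq:xd}, which is identical to \eqref{eq:x}. Hence $\bar x$ is constant and $\dot x^\perp=-d_0^*u$. Differentiating $\norm{x^\perp}^2$ gives
\[
\frac12\frac{d}{dt}\norm{x^\perp}^2=-\ip{d_0x}{u},
\]
exactly as before. Differentiating $\norm{u}^2$ using \eqref{eq:ud} gives
\[
\frac12\frac{d}{dt}\norm{u}^2=-\beta\ip{\Delta_1u}{u}-\eta\norm{u}^2+\ip{u}{d_0x}.
\]
The mixed terms cancel, which is precisely the stated identity. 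In particular, $E$ is nonincreasing.

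\emph{Convergence.} Since $E$ is bounded and nonincreasing, trajectories are precompact in the finite-dimensional phase space. I then apply LaSalle's invariance principle with $E$ as the Lyapunov function. Because $\Delta_1$ is positive semidefinite, the dissipation vanishes exactly on the set $\{(x,u):u=0\}$, and this holds without any restriction to $\mathcal H^1$; this is where $\eta>0$ enters.

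I then find the largest invariant subset of $\{u=0\}$. If $u\equiv 0$ along a trajectory, then $\dot u=0$, so \eqref{eq:ud} forces $d_0x=0$. Connectivity of the $1$-skeleton then gives $x=c\1$, and \eqref{eq:xd} shows that such states are indeed equilibria. Thus the omega-limit set lies in $\{(c\1,0):c\in\R\}$, and conservation of the mean fixes $c=\bar x_0$. The omega-limit set is therefore the singleton $(\bar x_0\1,0)$, and the whole trajectory converges to it.

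\emph{Main obstacle.} The only delicate step is verifying that the invariant-set characterization uses $\dot u=0$ correctly. That step is routine. Note also that Proposition~\ref{prop:harmonic} no longer holds: applying $P_{\mathcal H^1}$ to \eqref{eq:ud} gives
\[
\frac{d}{dt}P_{\mathcal H^1}u=-\eta P_{\mathcal H^1}u,
\]
so the harmonic part decays like $e^{-\eta t}$. This gives a direct check that the residual edge flow vanishes even when $\mathcal H^1\neq\{0\}$.

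Alternatively, the convergence can be made quantitative without LaSalle. The system is linear, and any purely imaginary eigenvector $(x,u)$ with eigenvalue $i\omega$ would satisfy $\mathrm{Re}\,\ip{\dot u}{u}$-type dissipation $\eta\norm{u}^2=0$. Hence $u=0$, then $d_0x=0$, and finally $\omega=0$. So every eigenvalue other than the zero mode along $\1$ has negative real part, which gives exponential convergence.
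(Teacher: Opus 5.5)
Your proof is correct and follows the same route as the paper's: you carry the extra $-\eta\norm{u}^2$ term through the energy computation of Theorem~\ref{thm:energy}, note that the LaSalle set is now $\{u=0\}$, use invariance to force $d_0x=0$ and hence $x=c\1$, and fix $c=\bar x_0$ by mean conservation. Your additional observations are valid supplements that the paper does not include: the $e^{-\eta t}$ decay of $P_{\mathcal H^1}u$, and the spectral argument for exponential convergence, in which the zero mode causes no difficulty because $\operatorname{span}\{(\1,0)\}$ and the zero-mean subspace are complementary invariant subspaces.
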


\begin{proof}
The proof of the energy identity is the same as before, except for the extra term $-\eta\norm{u}^2$. The LaSalle set is now determined by $u=0$, and invariance then forces $d_0x=0$, hence $x$ constant. Conservation of the mean fixes the limit constant at $\bar x_0$.
\end{proof}

\subsection{Nonlinear edge-flow saturation}

For a nonlinear modification, consider the saturated variant
\begin{align}
\dot x &= -d_0^*u, \label{eq:xn}\\
\dot u &= -\beta\Delta_1u-\gamma\Sigma(u)+d_0x, \qquad \beta>0,\ \gamma>0, \label{eq:un}
\end{align}
where $\Sigma(u)=(\sigma(u_e))_{e\in E}$. We assume throughout this subsection that $\sigma:\R\to\R$ is locally Lipschitz continuous, odd, and strictly sign-preserving:
\[
z\sigma(z)>0\qquad\text{for every }z\neq 0.
\]
Local Lipschitz continuity gives a unique maximal solution for every initial condition, and the sign condition supplies the additional dissipation used below. The canonical bounded choice $\sigma(z)=\tanh z$ satisfies all of these assumptions. The term $-\gamma\Sigma(u)$ acts componentwise on every nonzero edge-flow coordinate; it is therefore a nonlinear edge-flow dissipation, not a selective harmonic- or cycle-mode filter.

\begin{proposition}[Saturation removes residual edge flow]\label{prop:nonlinear}
Assume that $\sigma$ is locally Lipschitz continuous, odd, and satisfies $z\sigma(z)>0$ for $z\neq 0$. Then, for every initial condition, system \eqref{eq:xn}--\eqref{eq:un} has a unique global solution and satisfies
\[
\frac12\frac{d}{dt}E(x(t),u(t))
+\beta\norm{\Delta_1^{1/2}u(t)}^2
+\gamma\ip{\Sigma(u(t))}{u(t)}=0.
\]
Moreover,
\[
x(t)\to \bar x_0\1,
\qquad
u(t)\to 0
\qquad\text{as }t\to\infty.
\]
\end{proposition}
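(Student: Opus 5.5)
The plan follows the template of Theorem~\ref{thm:energy} and Proposition~\ref{prop:damped}: an energy identity, global existence from that identity, and then LaSalle's invariance principle. \emph{Local existence and uniqueness} hold because the right-hand side of \eqref{eq:xn}--\eqref{eq:un} is locally Lipschitz: it is linear except for $\Sigma$, which is locally Lipschitz componentwise. So there is a unique maximal solution on some interval $[0,T^*)$.

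\emph{Energy identity.} Proposition~\ref{prop:mean} still applies, since its proof only uses $\dot x=-d_0^*u$. The computation in the proof of Theorem~\ref{thm:energy} then carries over verbatim. The mixed terms $-\ip{d_0x}{u}$ and $+\ip{u}{d_0x}$ cancel, and the only new contribution is $\ip{u}{-\gamma\Sigma(u)}=-\gamma\sum_e u_e\sigma(u_e)$. This yields the stated identity. The sign condition gives $\ip{\Sigma(u)}{u}\ge 0$, with equality iff $u=0$. Hence $E$ is nonincreasing, so $\norm{x^\perp}$ and $\norm{u}$ stay bounded. Since $\bar x$ is constant, the whole state stays in a compact set. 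Boundedness excludes finite-time blow-up, so $T^*=\infty$. Oddness of $\sigma$ is not needed beyond consistency with the sign condition.

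\emph{Asymptotics.} The system is autonomous, the trajectory is precompact, and $E$ is a $C^1$ Lyapunov function with
\[
\dot E=-2\beta\ip{\Delta_1u}{u}-2\gamma\ip{\Sigma(u)}{u}\le 0 .
\]
LaSalle's invariance principle therefore places the omega-limit set inside the largest invariant subset of $\{\dot E=0\}$. Both terms are nonnegative, so $\dot E=0$ forces $\ip{\Sigma(u)}{u}=0$, hence $u=0$. This is exactly where strict sign preservation enters. On an invariant set with $u\equiv 0$, we get $\dot u\equiv 0$, so \eqref{eq:un} gives $d_0x=0$. Connectedness then gives $x=c\1$.

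The limit set consists of equilibria $(c\1,0)$, and Proposition~\ref{prop:mean} forces $c=\bar x_0$. The omega-limit set is therefore the singleton $\{(\bar x_0\1,0)\}$, and precompactness gives convergence of the full trajectory.

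The main point requiring care is that LaSalle is applied to a nonlinear but locally Lipschitz vector field. It requires continuity of $\dot E$ as a function of the state and forward invariance of a compact sublevel set. Both follow from continuity of $\sigma$ and from the energy identity, with the sublevel set taken within the affine slice $\{\bar x=\bar x_0\}$.
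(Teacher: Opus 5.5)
Your proposal is correct and follows essentially the same route as the paper: local Lipschitz well-posedness, the energy identity with the extra term $-\gamma\ip{\Sigma(u)}{u}$, boundedness and global existence from $E$ plus the conserved mean, and LaSalle with the strict sign condition forcing $u=0$, hence $d_0x=0$, $x=c\1$, and $c=\bar x_0$. The only implicit step is $\sigma(0)=0$, which you need when reading $d_0x=0$ off $\dot u\equiv 0$; it follows from oddness, or from continuity together with the sign condition.
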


\begin{proof}
Local Lipschitz continuity of $\sigma$ makes the vector field in \eqref{eq:xn}--\eqref{eq:un} locally Lipschitz, so a unique maximal solution exists. The calculation in Theorem~\ref{thm:energy} gives
\[
\frac12\frac{d}{dt}E(x,u)
=-\beta\norm{\Delta_1^{1/2}u}^2-\gamma\ip{\Sigma(u)}{u}.
\]
Because
\[
\ip{\Sigma(u)}{u}=\sum_{e\in E}\sigma(u_e)u_e\geq 0,
\]
the energy is nonincreasing. The mean opinion is conserved exactly as in Proposition~\ref{prop:mean}. Hence $E$ bounds $u$ and $x^\perp$, while the conserved mean bounds the remaining constant component of $x$. The maximal solution therefore remains bounded in the finite-dimensional state space and is global.

By the strict sign condition, $\ip{\Sigma(u)}{u}=0$ holds if and only if $u=0$. The largest invariant subset of the zero-dissipation set is consequently contained in $\{u=0\}$. On that set, invariance of \eqref{eq:un} requires $d_0x=0$, so connectedness gives $x=c\1$; conservation of the mean fixes $c=\bar x_0$. LaSalle's invariance principle now yields the stated limit.
\end{proof}

The two modifications remove residual edge flow for different mathematical reasons. Linear damping adds the coercive term $\eta\norm{u}^2$ to the energy law and therefore damps every edge-flow component, including exact, coexact, and harmonic components. The nonlinear saturation adds the componentwise dissipation $\gamma\sum_e\sigma(u_e)u_e$ and likewise leaves no nonzero edge-flow component neutrally preserved under the stated strict sign condition. Neither result establishes selective suppression of harmonic modes, cycle-supported flow, or behavioral echo chambers. They are statements about two modified cochain dynamical systems; any intervention interpretation would require a separate behavioural and empirical model.

\subsection{Scope and limited contextual analogy}

The model supplies a minimal mathematical description of coupled node positions and an independently represented edge-flow field. Its harmonic component describes the part of that field that is invariant under the stated linear dynamics. This is not an operational model of an echo chamber: the system has no variables for polarization, community-specific belief clustering, homophily, selective exposure, recommender systems, confirmation bias, or message semantics. In particular, global node consensus in the model is incompatible with treating a nonzero harmonic edge cochain as evidence, by itself, of a behavioral echo chamber.

The cited literature on online curation, exposure, and polarization \cite{KitchensJohnsonGray2020,BandyDiakopoulos2021,KelmEtAl2023} provides contextual motivation for studying recirculation in communication systems. It does not validate an identification of harmonic edge flow with any recognized echo-chamber measure. The limited analogy used here is only that a communication complex can support persistent circulation in an independently modelled edge-flow variable after node opinions have converged. Establishing a behavioral interpretation would require additional state variables, an operational observable, and empirical validation beyond the present model.

\subsection{Topological capacity and realized residual edge flow}

The model distinguishes a topological capacity indicator from an initial-condition-dependent residual-flow indicator. The capacity is
\[
\chi_{\mathrm{top}}:=\dim \mathcal H^1,
\]
which is the first Betti number of the cell complex under the present real-coefficient conventions. It counts independent harmonic $1$-cochain directions after filled cycles have been removed by the $d_1$ term. Thus $\chi_{\mathrm{top}}$ is not a count of all graph cycles in the $1$-skeleton: a loop that bounds a $2$-cell is visible as a graph cycle but contributes no independent first-cohomology class.

The quantity $\chi_{\mathrm{top}}$ gives the dimension of the harmonic subspace available for residual edge flow. It does not measure the amount of residual circulation realized in a particular trajectory. For a given initial condition, define
\[
\rho_\infty(u_0):=\norm{P_{\mathcal H^1}u_0}.
\]
By Theorem~\ref{thm:main}, $\rho_\infty(u_0)$ is exactly the norm of the limiting edge flow. Thus $\chi_{\mathrm{top}}>0$ records capacity, whereas $\rho_\infty(u_0)>0$ records realized persistent harmonic edge circulation. In particular, a complex may have $\chi_{\mathrm{top}}>0$ while still satisfying $\rho_\infty(u_0)=0$ for an initial edge flow with zero harmonic projection. Under the weighted extension described above, the capacity remains topological but the orthogonal projection, and hence $\rho_\infty(u_0)$, depends on the chosen inner product.

\subsection{Mathematical implications of the variants}

The damped and nonlinear variants show mathematically that adding a direct dissipative term can remove residual edge flow. In the present model, uniform linear damping acts on the entire edge-flow variable, and the nonlinear saturation term acts on every nonzero edge-flow component. The analysis therefore does not identify a policy intervention, a selective ranking rule, or a mechanism that preferentially targets behavioral echo chambers. Any platform-design interpretation would require a separate model that represents exposure, ranking, user response, and the relevant empirical outcome measures.

\section{Conclusion}

This paper studied a finite-dimensional cochain model in which node positions and an independently represented edge-flow variable evolve on a connected cell complex. The analysis establishes three linked conclusions. First, the mean node position is conserved and node positions converge to global consensus. Second, the non-harmonic part of the edge flow is dissipated, while the harmonic projection of the initial edge flow is invariant. Consequently, the limiting edge flow is $P_{\mathcal H^1}u_0$: residual harmonic edge circulation occurs precisely when the initial edge flow has a nonzero harmonic projection.

The topological conclusion is therefore conditional rather than generative. Nontrivial first cohomology supplies a capacity for residual harmonic edge flow, but it does not create such a residual without harmonic excitation in the initial condition. The filled-versus-unfilled simulations make this distinction concrete: filling a visible boundary loop removes the associated $H^1$ direction, whereas an unfilled loop can retain a deliberately initialized harmonic component. The rank comparisons likewise display the dimension of the available harmonic sector under controlled initialization rather than an automatic increase in persistence with topological rank.

The two modified systems clarify how the baseline neutral harmonic sector can be removed. Uniform linear damping is coercive on the full edge-flow variable, and the componentwise saturation term is dissipative on every nonzero edge-flow component under the stated regularity and strict-sign assumptions. Both variants therefore drive the full edge flow to zero; neither constitutes a selective harmonic-mode filter or a platform intervention. The analytical conclusions are supported by the reproducible numerical comparisons and the openly archived code and data described above.

The model has deliberate limitations. It is a minimal phenomenological node--edge system, not a micro-founded model of message production, exposure, ranking, or user response. It has no variables for polarization, cluster consensus, homophily, selective exposure, recommender systems, confirmation bias, or message semantics. A residual harmonic edge cochain is thus not an operational measure of an echo chamber and does not establish the formation, reinforcement, or amplification of a behavioral echo chamber. The appropriate interpretation is a limited mathematical analogy: node consensus can coexist with persistent circulation in a separately represented edge-flow variable.

Future work can extend this baseline by introducing weighted and directed complexes, time-varying topology, external forcing, heterogeneous node dynamics, and additional state variables that represent exposure or ranking. Such extensions would require their own modeling assumptions, analytical treatment, and empirical validation. They should not be regarded as consequences of the present ODE system.

\section*{Data and code availability}
The deterministic code, raw numerical trajectories, JSON metadata, summary metrics, and figure-generation routines supporting the numerical study are openly archived at \href{https://doi.org/10.5281/zenodo.21946369}{https://doi.org/\allowbreak10.5281/\allowbreak zenodo.\allowbreak21946369}. The version-controlled source repository is available at \href{https://github.com/mboudour/hodge-node-edge-information-flow}{https://github.com/\allowbreak mboudour/\allowbreak hodge-node-\allowbreak edge-\allowbreak information-\allowbreak flow}.

\section*{Declaration of generative AI and AI-assisted technologies in the manuscript preparation process}
During the preparation of this work, the author used OpenAI ChatGPT solely for English-language editing and as a native-speaker-style language checker. The tool was not used to generate scientific content, data, analyses, results, or conclusions. The author reviewed and edited all language suggestions as needed and takes full responsibility for the content of the published article.

\bibliographystyle{plainurl}
\bibliography{democracy_gds}

\end{document}